\documentclass[aps,prl,groupedaddress,nofootinbib,notitlepage,showpacs,floatfix,twocolumn]{revtex4-1}

\usepackage{graphicx,graphics,epsfig,subfigure,times,bm,bbm,amssymb,amsmath,amsfonts,mathrsfs}
\usepackage[matrix,frame,arrow]{xypic}
\usepackage[pdfstartview=FitH]{hyperref}
\usepackage[pdftex]{color}

\newcommand{\beq}{\begin{equation}}
\newcommand{\eeq}{\end{equation}}
\newcommand{\beqnn}{\begin{equation*}}
\newcommand{\eneqnn}{\end{equation*}}
\newcommand{\beqy}{\begin{eqnarray}}
\newcommand{\eneqy}{\end{eqnarray}}
\newcommand{\beqynn}{\begin{eqnarray*}}
\newcommand{\eneqynn}{\end{eqnarray*}}

\newcommand{\ket}[1]{ | #1\rangle}

\newcommand{\bO}{{\bf \Omega}}
\newcommand{\hbO}{\hat{\bf {\Omega}}}

\newcommand{\bes} {\begin{subequations}}
\newcommand{\ees} {\end{subequations}}
\newcommand{\bea} {\begin{eqnarray}}
\newcommand{\eea} {\end{eqnarray}}

\newenvironment{proof}[1][Proof]{\noindent\textbf{#1.} }{\ \rule{0.5em}{0.5em}}
\newtheorem{mytheorem}{Theorem}

\newtheorem{mycorollary}{Corollary}

\newcommand{\ignore}[1]{}

\begin{document}

\title{Optimally combining dynamical decoupling and quantum error correction}
\author{Gerardo~A. Paz-Silva$^{(1,4)}$ and D. A. Lidar$^{(1,2,3,4)}$}

\begin{abstract}
We show how dynamical decoupling (DD) and quantum error correction (QEC) can be optimally combined in the setting of fault tolerant quantum computing. To this end we identify the optimal generator set of DD sequences designed to protect quantum information encoded into stabilizer subspace or subsystem codes. This generator set, comprising the stabilizers and logical operators of the code, minimizes a natural cost function associated with the length of DD sequences. We prove that with the optimal generator set the restrictive local-bath assumption used in earlier work on hybrid DD-QEC schemes, can be significantly relaxed, thus bringing hybrid DD-QEC schemes, and their potentially considerable advantages, closer to realization.

\end{abstract}

\affiliation{Departments of $^{(1)}$Chemistry, $^{(2)} $Physics, and $^{(3)}$Electrical
Engineering, and $^{(4)}$Center for Quantum Information Science \&
Technology, University of Southern California, Los Angeles, California
90089, USA}
\maketitle

{\it Introduction}.---%
The nemesis of quantum information processing is decoherence, the outcome of the inevitable interaction of a quantum system with its environment, or bath. Several methods exist that are capable of mitigating this undesired effect. Of particular interest to us here are quantum error correction (QEC) \cite{Shor:95,Calderbank:96,Steane:96b,Gottesman:96} and dynamical decoupling (DD) \cite{Viola:98,Zanardi:98b,Duan:98e,Viola:99,Yang:2010:2}. 
QEC is a closed-loop control scheme which encodes information and flushes entropy from the system via a continual supply of fresh ancilla qubits, which carry off error syndromes. DD is an open-loop control scheme that reduces the rate of entropy growth by means of pulses applied to the system, which stroboscopically decouple it from the environment. 
QEC and DD have complementary strengths and weaknesses. QEC is relatively resource-heavy, but can be extended into a fully fault-tolerant scheme, complete with an accuracy threshold theorem \cite
{Aharonov:08,Knill:97a,Terhal:04,Aliferis:05,Aharonov:05,ng:032318}. DD demands significantly more modest resources, can theoretically achieve arbitrarily high decoherence suppression \cite{KhodjastehLidar:04,Uhrig:07,Yang:2008:180403,CUDD,WFL:09,NUDD,Kuo:1106.2151,PhysRevA.84.060302}, but cannot by itself be made fully fault-tolerant \cite{PhysRevA.83.020305}.  

A natural question is whether a hybrid QEC-DD scheme is advantageous
relative to using each method separately in the setting of fault-tolerant quantum computing (FTQC). {Typically, improvements in gate accuracy achieved by DD
mean that more noise can be tolerated by a hybrid QEC-DD scheme than by QEC
alone, and that invoking DD can reduce the overhead cost of QEC. While early studies identified various advantages \cite{ByrdLidar:01a,Boulant:02,KhodjastehLidar:03}, they did not address fault tolerance. A substantial step forward was taken in
Ref.~\cite{NLP:09}, which analyzed \textquotedblleft DD-protected
gates\textquotedblright\ (DDPGs) in the FTQC 
setting. Such gates are obtained by preceding every physical gate (i.e., a
gate acting directly on the physical qubits) in a fault tolerant quantum
circuit by a DD sequence. DDPGs can be less noisy than the bare,
unprotected gates, since DD sequences can substantially reduce the strength
of the effective system-environment interaction just at the moment before
the physical gate is applied. 
The gains can be very substantial if the intrinsic noise per gate is sufficiently small, and can make quantum computing scalable with DDPGs, where it was not with unprotected gates \cite{NLP:09}.} 

The analysis in Ref.~\cite{NLP:09} assumed a ``local" perspective. Rather than analyzing the complete FT quantum circuit, each single- or multi-qubit gate was separately DD-protected. This required a strong locality
constraint limiting the spatial correlations in the noise, known as the
\textquotedblleft local bath\textquotedblright\ assumption. 
Unfortunately, many physically relevant error models violate this assumption~\cite{Aharonov:05,Aliferis:05,ng:032318}. 

Here we aim to integrate DD with FTQC using a \emph{global} perspective. This appears to be necessary in order to achieve high order decoupling in a
multi-qubit setting, under general noise models. Rather than protecting individual gates we shall show how an entire FT quantum register, including data and ancilla qubits, can be enhanced using DD. {This will allow us to relax}
the restrictive local bath assumption.
{Along the way, we identify}
a DD strategy that takes into account the basic structure and building blocks of FT quantum circuits, and 
{identify optimal}
DD pulse sequences compatible with this structure, that drastically reduce the number of pulses required compared with previous designs. Such a reduction is crucial in order to reap the benefits of DD protection, for if a DD sequence becomes too long, noise can accumulate to such an extent as to outweigh any DD enhancements.

{\it The noise model}.---%
We assume a completely general noise Hamiltonian $H$ acting on the joint system-bath Hilbert space, the only assumption being that $\|H\| < \infty$, where $\|\cdot \|$ denotes the sup-operator norm (the largest singular value, or largest eigenvalue for positive operators) \cite{norm-to-freq}. Informally,  $H$ contains a ``good" and a ``bad" part, the latter being the one we wish to decouple. $H$ is $k$-local, i.e., involves up to $k$-body interactions, with $k\geq 1$. We allow for arbitrary interactions between the system and the bath, as well as between different parts of the system or between different parts of the bath. See Fig.~\ref{fig:1}.
\begin{figure}[htbp]
\centering
\includegraphics[width=\columnwidth]{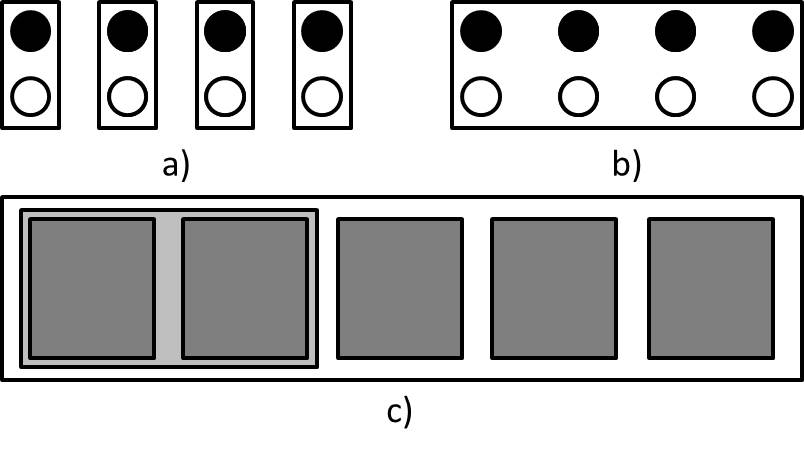}
\caption{Qubits and corresponding baths represented as white and black circles respectively. Bath operators corresponding to different operators inside a box do not necessarily commute, while they do if the baths are in different boxes. The Hamiltonians considered are general within each box, but not between them. In (a) a diagram of the ``local bath assumption" used in Ref.~\cite{NLP:09} is shown, while (b) represents the general scenario considered in fault-tolerance~\cite{Aharonov:05,Aliferis:05,ng:032318}. In (c) we illustrate one of our key results: domains are allowed to grow logarithmically in the size of the problem the FTQC is solving. The dark grey boxes represent such domains, each containing $O[\log(k_{\textrm{tot}})]$ physical qubits at the highest level of concatenation, where $k_{\textrm{tot}}$ is the total number of logical qubits. When two domains need to interact (light grey box), then the joint DD generator set is used and the locality of the bath is updated accordingly.}
\label{fig:1}
\end{figure}

{\it Dynamical decoupling}.---%
DD pulse sequences comprise a series of rapid unitary rotations of the system qubits about different axes, separated by certain pulse intervals, and generated by a control Hamiltonian $H_C(t)$.
They are designed to suppress decoherence arising from the ``bad" terms in $H$. This is typically manifested in the suppression or even
vanishing of the first $N$ orders, in powers of the total evolution time $T$, of the time-dependent perturbation
expansion (Dyson or Magnus series \cite{Blanes:08}) of the evolution operator $U(T) = \mathcal{T}\exp(-i\int_0^T H(t)dt)$, where $H(t)$ is $H$ in the ``toggling frame" (the interaction picture generated by the DD pulse sequence Hamiltonian $H_C(t)$) \cite{Viola:99}, and $\mathcal{T}$ denotes time-ordering. When the first non-identity system-term of $U(T)$ appears at $O(T^{N+1})$ one speaks of \textit{$N$th order decoupling}. Such DD sequences are now known and well understood \cite{DD-review}. 

Most DD sequences 
can be defined in terms of  pulses chosen from a mutually orthogonal operator set (MOOS), i.e., a set of unitary and Hermitian operators $\bO =\{ \Omega_i \}_{i=1}^{|\bO|}$, $\Omega_i^2=\openone$ (identity) $\forall i$, and such that any pair of operators either commute or anticommute \cite{NUDD}. The generator set of a MOOS (gMOOS), $\hbO= \{ \Omega_i \}_{i=1}^{|\hbO|}$, is defined as the minimal subset $\hbO \subseteq \bO$ such that every element of $\bO$ is a product of elements of $\hbO$ but no element in $\hbO$ is itself a product of elements in $\hbO$ \cite{notation}.
All deterministic DD sequences are finitely generated, meaning that the pulses are elements, or products of elements, of a finite DD generator set (DDGS), \emph{which we identify with the gMOOS $\hbO$}.

The centralizer of the MOOS $\bO$ is ${\mathcal{C}}_{\bO} := \{A\ |\ [A,\bO]=0\}$, i.e., the set of operators which commute with all MOOS elements.
A good example of a gMOOS is the generator set $\hat{P}_n=\{X^{(i)},Z^{(i)}\}_{i=1}^n$, where $X^{(i)}$ ($Z^{(i)}$) denotes the Pauli-$x$ ($z$) matrix acting on the $i$th qubit, of the Pauli group $P_n = P_1^{\otimes n}$ on $n$ qubits (the group of all $n$-fold tensor products of the standard Pauli matrices $P_1 = \{\openone,X,Y,Z\}$, modulo $\mathbb{Z}_2$). 
For simplicity, since we will be dealing with qubits and are particularly interested in decoupling sequences that allow for bitwise pulses, we shall assume henceforth that $\hbO \subseteq {P}_n$. It is necessary to recast the notion of decoupling order in the MOOS scenario, since the previously mentioned notion turns out to be too strong for our purposes.

Note that any operator $A$ can decomposed as  $A= A_{0} + A_{r}$, where $A_0$ ($A_r$) denotes the component that commutes (does not commute) with all elements of a MOOS, i.e., $A_0 \in {\mathcal{C}}_{\bO}$. We shall say that a pulse sequence with generator set $\hbO$ lasting total time $T$ achieves ``$N$th order $\hbO$-decoupling'' if the joint system-bath unitary evolution operator at the conclusion of the sequence  becomes
\begin{equation}
\label{sxdd}
U^{[N]} (T) = e^{iT H^{\textrm{eff},N} (T)}  ,
\end{equation}
where the effective Hamiltonian is
\bes
\bea
H^{\textrm{eff},N} (T) &\equiv& H^{\textrm{eff},N}_{0} (T)   + H^{\textrm{eff},N}_r (T)\\
T\|H^{\textrm{eff},N}_r(T)\| &\sim& O[(T \Vert H \Vert )^{N+1}]\\
\label{eq:2c}
H^{\textrm{eff},N}_{0} (T) &\in& {\mathcal{C}}_{\bO}.
\eea
\ees
The subspace invariant under ${\mathcal{C}}_\bO$ has therefore been decoupled, in the sense that terms not commuting with ${\mathcal{C}}_\bO$ appear only in $O(T^{N+1})$. Thus the choice of the pulse generator set $\hbO$  determines what subspace(s) can be decoupled, and conversely a subspace one is interested in decoupling to arbitrary order implies a choice of $\hbO$. 

{\it Optimization of the DDGS}.---%
We define the cost of a DD sequence as the total number of pulse intervals it uses to achieve $N$th order $\hbO$-decoupling. For all known DD sequences (even those optimized for multiple qubits \cite{PhysRevA.73.062317}), the cost is at least 
\beq
c_{\hbO}=f(N)^{|\hbO|},
\label{eq:cost}
\eeq 
and $f(N)$ depends on the particular DD sequence. Pulse interval optimization has already reduced $f(N)$ from $2^N$ for CDD to $N+1$ for NUDD \cite{DD-review}.  Here we are concerned instead with the optimization of the cost exponent $|\hbO|$, to which end the following theorem will prove to be crucial (for the proof see \cite{proof-th1}) \cite{th1-general}: 
\begin{mytheorem}
\label{gral}
Let $B$ be a subgroup of the the Pauli group $P_n$, generated by $\hat{B}$. 
Consider a DDGS $\hbO \subseteq P_n$ which decouples ${B}$ in the sense that the only element in the intersection between ${\mathcal{C}}_\bO$ and $B$ is $\openone$. Then $|\hbO | {\geq} |\hat{B}|$. Moreover, the DDGS $\hbO= \hat{B}$ decouples ${B}$ in the desired sense, and automatically saturates the bound. 
\end{mytheorem}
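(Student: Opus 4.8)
\medskip
\noindent\textbf{Proof plan.} The plan is to run the whole argument in the standard symplectic $\mathbb{F}_2$-representation of the Pauli group. Write an element of $P_n$ modulo phases as a binary vector $v=(x|z)\in\mathbb{F}_2^{2n}$; the group product is then vector addition, and two Paulis commute iff their symplectic product $\omega(v,v')=x\cdot z'+z\cdot x'$ vanishes, $\omega$ being nondegenerate. A subgroup $B\subseteq P_n$ corresponds to a subspace $\mathcal{V}_B$, and since a minimal generating set of a subspace is a basis, $\dim\mathcal{V}_B=|\hat B|$; similarly $\hbO$ spans a subspace $\mathcal{W}$ with $\dim\mathcal{W}\le|\hbO|$. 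The one structural fact I would invoke is that an operator commuting with two Paulis commutes with their product, so it commutes with every element of the MOOS $\bO$ generated by $\hbO$ iff it commutes with every element of $\hbO$; hence $\mathcal{C}_\bO\cap P_n$ corresponds to the symplectic complement $\mathcal{W}^{\perp}$, and $\mathcal{C}_\bO\cap B$ to $\mathcal{V}_B\cap\mathcal{W}^{\perp}$.

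\medskip
First I would prove the lower bound. The hypothesis $\mathcal{C}_\bO\cap B=\{\openone\}$ reads $\mathcal{V}_B\cap\mathcal{W}^{\perp}=\{0\}$; since $\dim\mathcal{W}^{\perp}=2n-\dim\mathcal{W}$ by nondegeneracy of $\omega$, the standard estimate $\dim(\mathcal{V}_B\cap\mathcal{W}^{\perp})\ge\dim\mathcal{V}_B+\dim\mathcal{W}^{\perp}-2n$ forces $0\ge\dim\mathcal{V}_B-\dim\mathcal{W}$, i.e.\ $|\hbO|\ge\dim\mathcal{W}\ge\dim\mathcal{V}_B=|\hat B|$. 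For achievability I would take $\hbO=\hat B$, so $\mathcal{W}=\mathcal{V}_B$ and $\mathcal{C}_\bO\cap B$ corresponds to $\mathcal{V}_B\cap\mathcal{V}_B^{\perp}$, the radical of $\omega$ on $\mathcal{V}_B$; under the dictionary this is the center $Z(B)$, so when that is trivial one gets $\mathcal{C}_\bO\cap B=\{\openone\}$, and $\dim\mathcal{W}=\dim\mathcal{V}_B=|\hat B|$ makes the bound tight.

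\medskip
The step I expect to be the main obstacle is pinning down the achievability: $\hbO=\hat B$ leaves out of $B$ exactly its center $Z(B)$, so it decouples $B$ in the literal sense $\mathcal{C}_\bO\cap B=\{\openone\}$ precisely when $B$ is centerless, equivalently when the linear span of $\hat B$ is a symplectic subspace of $\mathbb{F}_2^{2n}$ --- the relevant situation here (and when $\hat B$ carries code stabilizers, so that $Z(B)$ is the stabilizer group, those surviving elements act trivially on the code space, so the decoupling is still effective in the physically relevant sense). The lower bound itself needs no hypothesis on $B$ beyond $B\subseteq P_n$, and the remaining bookkeeping --- that minimality of a Pauli generating set is the same as $\mathbb{F}_2$-linear independence, hence $\dim\mathcal{V}_B=|\hat B|$ and $\dim\mathcal{W}\le|\hbO|$ --- is immediate.
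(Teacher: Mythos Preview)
Your lower-bound argument is correct and is the natural linear-algebraic phrasing of the paper's proof. The paper does not invoke the symplectic $\mathbb{F}_2$-picture explicitly; instead it assigns to each $b\in B$ the commutation ``string'' $(s_1,\dots,s_{|\hbO|})\in\{\pm\}^{|\hbO|}$ with $s_\alpha=\pm$ according to $\Omega_\alpha b\Omega_\alpha=\pm b$, observes that $b\mapsto s$ is a group homomorphism whose kernel is $B\cap\mathcal{C}_\bO$, and then pigeonholes: if $|\hat B|>|\hbO|$ two distinct group elements share a string, and their product is a non-identity element with the all-$+$ string, contradicting $B\cap\mathcal{C}_\bO=\{\openone\}$. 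Your map $\mathcal{V}_B\to\mathcal{W}^*$ induced by $\omega$ is exactly this homomorphism, and your dimension inequality $\dim\mathcal{V}_B+\dim\mathcal{W}^\perp\le 2n$ is the rank-nullity version of the same counting. So the two routes differ in packaging rather than in substance; yours makes the underlying bilinear-algebra structure transparent and immediately shows why the bound depends only on $\dim\mathcal{W}$ (hence on the group generated by $\hbO$, not on the particular choice of generators).

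On achievability you are in fact more careful than the paper. The paper simply asserts that $\hbO=\hat B$ works; you correctly note that $\mathcal{C}_\bO\cap B$ then corresponds to the radical $\mathcal{V}_B\cap\mathcal{V}_B^{\perp}$, i.e.\ to the center $Z(B)$, so the literal decoupling condition $\mathcal{C}_\bO\cap B=\{\openone\}$ holds precisely when $B$ is centerless. That caveat is real (e.g.\ $B=\langle Z_1\rangle$ is not decoupled by $\hbO=\{Z_1\}$), and your remark that in the paper's intended applications any surviving central elements are stabilizers acting trivially on the code space is the right physical resolution. This is not a flaw in your proof; it is a refinement of the theorem's ``moreover'' clause.
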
  
As an immediate application, we reproduce the well-known result that $\hbO = \hat{P_n}$, and hence $|\hbO|=2n$, is optimal for $n$ qubits without encoding \cite{Viola:99}. Indeed, in this case the most general noise Hamiltonian is spanned by the elements of the ``error group" $B={P}_n$, so $|\hat{B}| = {2n}$ and thus by Theorem~\ref{gral} for any DDGS $\hbO$ it must be that $|\hbO| \geq 2n $. On the other hand $\hbO = \hat{P_n}$ indeed decouples ${P_n}$ since ${\mathcal{C}}_\bO = \openone$. Note also that Eq.~\eqref{eq:2c} yields $H^{\textrm{eff},N}_{0} (T) \propto \openone$. Moreover, since DD sequences are known that achieve $N$th order $\hbO$-decoupling for $n\geq 1$ qubits (specifically CDD \cite{KhodjastehLidar:04} and NUDD \cite{NUDD}, with explicit $\hat{P}_n$-based constructions given in Ref.~\cite{NUDD}), the generating set $\hat{P}_n$ is the smallest one capable of achieving $N$th order decoupling of a general $n$-qubit Hamiltonian. However, as we discuss next, there is a better choice for the purpose of protecting a code subspace.

\textit{DD generator set for a QEC code}.---%
Consider a set of $n$ physical qubits encoding $k$ logical and $r$ gauge qubits via some distance $d$ code, i.e., an $[[n,k,r,d]]$ subsystem code \cite{Bacon:05,Kribs:2005fk,Poulin:2005uq} (or an $[[n,k,d]]$ stabilizer code \cite{Gottesman:96} for $r=0$), subject to the general noise model described above. Let $\hat{\bf S} = \{ S_\mu \}_{\mu=1}^Q$ denote the stabilizer generators, where $Q=n-(k+r)$, let $\hat{\bf L}=\{ X_L^{(i)},Z_L^{(i)}\}_{i=1}^{k}$ denote the logical-operator generators of the code, and $\hat{\bf G}=\{X_\nu,Z_\nu\}_{\nu=1}^{r}$ the gauge operator generators. In the $[[n,k,d]]$ code case, each error correctable by the code maps a codeword to an syndrome subspace labeled by an error syndrome, i.e., a sequence of $\pm 1$ eigenvalues of the stabilizer generators \cite{Gottesman:96}. In order to properly integrate DD with QEC, we require a set of DD generators $\hbO$ which preserves the error syndromes to order $N$, i.e., such that $H_{0}^{\textrm{eff},N}$ acts trivially on each of the syndrome subspaces and does not mix them, so that at the conclusion of the sequence the original noise model for which the code was chosen, is preserved (again, to order $N$). This form of the $N$th order $\hbO$-decoupling requirement will enable error correction to function as intended. A key observation is that in light of this, we do not need to protect the complete  $2^n$-dimensional Hilbert space $\mathcal{H}$, but rather the $2^{n-k}$ syndrome subspaces. 

To this end we propose to choose a complete set of stabilizer and logical operator generators as DD sequence generators, i.e., let $\hbO= \hat{\bf S}\cup \hat{\bf L}$  \cite{SDD}. We refer to any DD sequence having a DDGS of this type as an ``SLDD" sequence. With this choice, $\hat{\mathcal{C}}_\bO = \hat{\bf S} \subset \hbO$. Now note that if $\hat{\mathcal{C}}_\bO\, \subseteq \, \bO$, then the elements in $\hat{\mathcal{C}}_\bO$ commute and they define $ |{\mathcal{C}}_\bO|=2^{|\hat{\mathcal{C}}_\bO|}$ subspaces characterized by their eigenvalue under the action of $\hat{\mathcal{C}}_\bO$. In this case we have independent $N$th order $\hbO$-decoupling of each of these subspaces. In other words, $H_{{0}}^{{\rm eff},N}$ leaves each of the syndrome subspaces invariant and does not mix them, as desired. Note that the choice $\hbO =  \hat{\bf S}\cup \hat{\bf L}$ also applies to subsystem codes~\cite{Bacon:05,Kribs:2005fk,Poulin:2005uq}. In this case each of the syndrome subspaces can be decomposed as $\mathcal{H}_{\textrm{logical}} \otimes \mathcal{H}_{\textrm{gauge}}$, where $\mathcal{H}_{\textrm{logical}}$ is invariant under $ \hat{\bf S}\cup\hat{\bf G} = \mathcal{C}_{\hbO} $, since the gauge operators act non-trivially on $\mathcal{H}_{\textrm{gauge}}$ only. Before proving its optimality, we next compare the cost of the SLDD sequence to decoupling the entire Hilbert space. 

{\it Relative cost of SLDD}.---%
For an $[[n,k,d]]$ code and an SLDD sequence, the number of stabilizer generators ($n-k$) plus logical operator generators ($2k$) yields $|\hbO|= n+k$, which means that $c_{\hat{\bf S}\cup \hat{\bf L}} = f(N)^{n+k} < c_{\hat{P}_n} = f(N)^{2n}$.  Often $n\gg k$, so that $c_{\hat{\bf S}\cup \hat{\bf L}} \sim \sqrt{c_{\hat{P}_n}}$. In the case of $[[n,k,r,d]]$ subsystem codes~\cite{Poulin:2005uq} the advantage is more pronounced: the number of stabilizers is $n-k-r$, so $|\hbO| = n+k-r$. As an example consider the Bacon-Shor $[[ m\times m, 1, (m-1)^2, 3]]$ subsystem code \cite{Bacon:05}, which has the highest (analytically) known fault-tolerant threshold for error correction routines with~\cite{Aliferis:05} and without measurements~\cite{Paz-Silva:2010fk}. In this case one would have $c_{\hat{\bf S}\cup \hat{\bf L}} = f(N)^{2 m} =  (c_{\hat{P}_{m^2}})^{1/m}$, a polynomial advantage that grows with the block size $m$.

{\it Choice of DDGS for protecting ancilla states}.---%
The protection of certain ancilla states is also an important part of fault tolerance. Such states can be thought of as QEC codes with small stabilizer sets. E.g., $ \ket{\textrm{cat}_{m,+}} = (\ket{0}^{\otimes a} + \ket{1}^{\otimes a})/\sqrt{2}$ is often used for fault-tolerant stabilizer measurements or for teleportation of encoded information. The stabilizer is generated by $ \{ X^{\otimes a}, \{Z_iZ_{i+1}\}_{i=1}^{a-1}\}$, and equals the DDGS.

\textit{Decoupling multiple subspaces or subsystems}.---%
How should one choose an optimal DDGS to decouple different subspaces simultaneously? Assume that there are distinct and non-overlapping sets of $\{n_i\}$ physical qubits comprising a quantum register, e.g., a complete register comprising $k$ logical qubits, along with the corresponding ancillas. Assume that they are partitioned into sets of sizes $\{k_i\}_{i=1}^p$, such that $k=\sum_{i=1}^p k_i $, and that each set $i$ is encoded in some subsystem (or subspace) code $[[n_i, k_i,r_i,d_i]]$. For each block of $k_i$ logical qubits we have an SLDD sequence with DDGS $\hbO_i = \hat{\bf S}_i\cup \hat{\bf L}_i $. 
Let the Hamiltonians of the different sets be $\{H_i\}$, and spanned by the error groups $\{B_i \subset P_{n_i}\}$. Using Theorem~\ref{gral}, it follows that if $\hbO_i$ is optimal for error group $B_i$ then $\hbO_{\textrm{tot}} = \cup_i \hbO_{B_i}$ optimally decouples the joint Hamiltonian spanned by $\cup_i B_i$. This form of composing a larger DDGS out of smaller modules guarantees that each term of a general Hamiltonian acting on the whole register must anticommute with at least one element in $\hbO_{\textrm{tot}}$, which in turn implies that $\hbO_{\textrm{tot}}$, used to construct, e.g., a CDD or NUDD sequence, is capable of independent $N$th order $\hbO_{\textrm{tot}}$-decoupling of each subspace or subsystem.

{\it Optimal DDGS for concatenated QEC codes}.---%
Many FTQC constructions are based on concatenated QEC codes \cite{Gaitan:book}, so what is the optimal DDGS for this case, cost-wise? Suppose an $[[n,k,r,d]]$ code is concatenated $R$ times. A complete generator set for \emph{all} the stabilizers of such a code is given by $ \cup_{q=1}^R \hat{\bf S}^{(q)}$, where $\hat{\bf S}^{(q)}$ is the stabilizer generator set of concatenation level $q$. Let $\hat{\bf L}^{(R)}$ denote the set of $R$th-concatenation level logical generators. 
\begin{mytheorem}
\label{concat}
The optimal DDGS for decoupling all the syndrome subspaces at concatenation level $R$ is the SLDD set $\hbO= \cup_{q=1}^R \hat{\bf S}^{(q)}  \cup \hat{\bf L}^{(R)}$, where $|\hbO|=n^R-(k+r)^R+2k^R$.
\end{mytheorem}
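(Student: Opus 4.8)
\emph{Proof plan.} I would deduce Theorem~\ref{concat} from Theorem~\ref{gral} together with a count of independent generators. Write $\hat{\mathcal{S}} := \cup_{q=1}^R \hat{\bf S}^{(q)}$ for the asserted generating set of the full level-$R$ stabilizer group $\mathcal{S}$, and $\hat{\mathcal{G}}$ for the collection of gauge-operator generators at all levels. The first step is to restate ``decoupling all level-$R$ syndrome subspaces'' as a condition on $\hbO$: $H^{\textrm{eff},N}_{0}$ must commute with every element of $\hat{\mathcal{S}}$ (so that it neither mixes syndrome subspaces nor acts within any of them as a non-scalar stabilizer observable) and must act trivially on the level-$R$ logical factor of each syndrome subspace, which---exactly as in the general SLDD discussion---means it commutes with $\hat{\bf L}^{(R)}$. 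Because $H$ is arbitrary, $H^{\textrm{eff},N}_{0}$ ranges over all of $\mathcal{C}_\bO$, so this is equivalent to requiring that the system part of $\mathcal{C}_\bO$ lie inside the group $\langle \hat{\mathcal{S}} \cup \hat{\mathcal{G}}\rangle$.

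For the lower bound, let $B$ be any subgroup of $P_{n^R}$ whose binary-symplectic span is a complement to that of $\langle \hat{\mathcal{S}}\cup\hat{\mathcal{G}}\rangle$. The condition just obtained forces $\mathcal{C}_\bO \cap B = \openone$, so Theorem~\ref{gral} yields $|\hbO| \geq |\hat B| = 2n^R - \dim\langle \hat{\mathcal{S}}\cup\hat{\mathcal{G}}\rangle$. For the matching upper bound, I would verify that the SLDD set $\hbO = \hat{\mathcal{S}}\cup\hat{\bf L}^{(R)}$ is admissible: its centralizer, restricted to system Paulis, is precisely $\langle \hat{\mathcal{S}}\cup\hat{\mathcal{G}}\rangle$---the centralizer of ``all stabilizers together with all level-$R$ logicals''---so it satisfies the requirement above and, fed into a CDD or NUDD construction, delivers independent $N$th order $\hbO$-decoupling of each syndrome subspace just as in the single-block SLDD argument. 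Granting that $\hat{\mathcal{S}}$ really is an independent generating set of $\mathcal{S}$, one has $|\hat{\mathcal{S}}\cup\hat{\bf L}^{(R)}| = \dim\langle \hat{\mathcal{S}}\cup\hat{\bf L}^{(R)}\rangle = 2n^R - \dim\langle\hat{\mathcal{S}}\cup\hat{\mathcal{G}}\rangle$, saturating the bound, so this DDGS is optimal.

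What remains is the count $|\hbO| = |\hat{\mathcal{S}}| + |\hat{\bf L}^{(R)}|$. The level-$R$ concatenated code is built by a recursion in which a level-$(q{+}1)$ block is assembled from $n$ level-$q$ blocks whose $(k{+}r)$-qubit ``effective'' registers are re-encoded by $(k{+}r)^q$ fresh copies of the base $[[n,k,r,d]]$ code; consequently layer $q$ of the level-$R$ tree carries $n^{R-q}(k{+}r)^{q-1}$ base-code blocks, each contributing $n-k-r$ stabilizer generators. Summing the geometric series, $|\hat{\mathcal{S}}| = (n-k-r)\sum_{q=1}^{R} n^{R-q}(k{+}r)^{q-1} = n^R - (k{+}r)^R$, which is also the stabilizer-generator count of the resulting $[[n^R,\,k^R,\,(k{+}r)^R-k^R,\,d^R]]$ subsystem code and hence confirms that the $\hat{\bf S}^{(q)}$ are mutually independent. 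With $|\hat{\bf L}^{(R)}| = 2k^R$ this gives $|\hbO| = n^R-(k{+}r)^R+2k^R$, as claimed. The step I expect to demand the most care is precisely this last bookkeeping for subsystem codes: tracking how the gauge registers propagate through concatenation (the source of the $(k{+}r)^q$ rather than $k^q$ factors) and verifying the mutual independence of $\hat{\mathcal{S}}\cup\hat{\bf L}^{(R)}$; by contrast the reduction to Theorem~\ref{gral} is then routine.
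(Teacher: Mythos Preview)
Your proposal is correct and follows essentially the same route as the paper: you identify the decoupling requirement as $\mathcal{C}_\bO\subseteq\langle\hat{\mathcal{S}}\cup\hat{\mathcal{G}}\rangle$, take $B$ to be a complement of this subgroup, invoke Theorem~\ref{gral} for the lower bound, and verify that the SLDD choice $\hbO=\hat{\mathcal{S}}\cup\hat{\bf L}^{(R)}$ has exactly $\langle\hat{\mathcal{S}}\cup\hat{\mathcal{G}}\rangle$ as its centralizer, saturating the bound---this is precisely the paper's argument, phrased in slightly more symplectic language.

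The only noteworthy difference is in the generator count. The paper obtains $Q(R)=n^R-(k+r)^R$ by the shortcut of viewing the $[[n,k,r,d]]$ subsystem code as an $[[n,k{+}r,d']]$ subspace code, so that after $R$ concatenations one has $(k+r)^R$ ``effective logical'' qubits, whence $Q(R)=n^R-(k+r)^R$ and $G(R)=(k+r)^R-k^R$ immediately. Your layer-by-layer geometric series $(n-k-r)\sum_{q=1}^{R}n^{R-q}(k+r)^{q-1}=n^R-(k+r)^R$ is the explicit unpacking of the same fact, and has the bonus of making the mutual independence of $\cup_q\hat{\bf S}^{(q)}$ manifest (each layer adds stabilizers supported on fresh degrees of freedom). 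Either way the arithmetic closes to $|\hbO|=n^R-(k+r)^R+2k^R$, so your anticipated ``most delicate'' step is in fact no harder than the paper's version.
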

Note that there are alternatives to this ``top-level" SLDD strategy; e.g., one could concatenate the DDGS for each block at each level $q$, but this would result in exponentially more DD pulses. Note also that by setting $R=1$ Theorem~\ref{concat} reduces to the optimality of SLDD for subspace or subsystem codes, with $|\hbO|=n+k-r$ as claimed above. The subspace case is recovered by setting $r=0$.

\begin{proof}
The number of physical qubits after $R$ levels of concatenation of any $[[n,k,r,d]]$ subsystem stabilizer code is $n(R)=n^R$, and the error group for the entire Hilbert space is the Pauli group $P_{n(R)}$. We need to protect the $2^{Q(R)}$ syndrome subspaces, where $Q(R)=|\cup_{q=1}^R \hat{\bf S}^{(q)}|$ is the total number of stabilizer generators after the code is concatenated $R$ times. 
$Q(R) = n(R)-L(R)-G(R)$, where $L(R)=k^R$ [$G(R)$] is the number of logical (gauge) qubits at level $R$, and $L(R)+G(R) = (k+r)^R$ \cite{stabilizer-count}.

The SLDD sequence generated by $\hbO= \cup_{q=1}^R \hat{\bf S}^{(q)}  \cup \hat{\bf L}^{(R)}$ satisfies the requirement of independent $N$th order $\hbO$-decoupling  of the $2^{Q(R)}$ syndrome subspaces since the stabilizers (as DD pulses) remove the errors at each level $q$, logical included (recall that a logical error at level $q-1$ anticommutes with at least one level $q$ stabilizer generator), but not the logical errors at the top level, for which we need $\hat{\bf L}^{(R)}$ as DD pulses.  Moreover, for this sequence $|\hbO| = Q(R) +2L(R) = 
n^R-(k+r)^R+2k^R$ as claimed. Thus what remains is to prove its optimality.

Any operator in $P_{n(R)}$ which is not a stabilizer or gauge operator acts as an error either within or between syndrome subspaces. Thus our choice of code dictates which elements of $P_{n(R)}$ act as errors, and clearly this error set is precisely $B=P_{n(R)}/\mathcal{C}_{\bO}$, where the centralizer generator is $\hat{\mathcal{C}}_{\bO} = \cup_{q=1}^R \hat{\bf S}^{(q)}  \cup \hat{\bf G}^{(q)}$. We have $|\hat{\mathcal{C}}_{\bO}| = Q(R)+2G(R)$. 
On the other hand $|\hat{B}| = 2n(R) - |\hat{\mathcal{C}}_{\bO}| = Q(R)+2L(R)$, so that $|\hat{B}| = |\hbO|$ and $B\cap\mathcal{C}_{\bO}=\openone$, which proves the optimality of $|\hbO|$ by virtue of Theorem~\ref{gral}.
\end{proof}

\textit{Optimizing the choice of DDGS for a complete quantum register: beyond the local bath assumption}.---%
We have now assembled and described all the ingredients for optimally combining DD with FTQC for protection of a complete quantum register. However, we must ensure that the cost of implementing the DD sequence does not spoil quantum speedups. To this end we consider once more an $[[n,k,r,d]]$ subsystem code concatenated $R$ times, used to encode an entire quantum register, and divide the register into $d(R)$ domains (e.g., a code block along with ancillas) of size $k_{\textrm{D}}(R) = O(k^R)$ logical qubits, such that the total number of logical qubits in the register is $k_{\textrm{tot}} = d(R)k_{\textrm{D}}(R)$. We then optimally decouple the $i$th domain using an SLDD sequence generated by $\hbO_i = \cup_{q=1}^R \hat{\bf S}^{(q)}_i  \cup \hat{\bf L}_i^{(R)}$, $i\in\{1,\dots d(R)\}$ (where $\hat{\bf S}^{(q)}_i$ and $\hat{\bf L}_i^{(R)}$ act non-trivially only on the qubits in the domain $i$), and ask for the maximal allowed size of each domain such that the DD sequence cost scales polynomially in $k_{\textrm{tot}}$, as this will ensure that any exponential quantum speedup is retained.
\begin{mycorollary}
\label{domain}
In a fault tolerant quantum computation the maximal allowed domain size compatible with a DDGS having cost  $c_{\hbO} = f(N)^{|\hbO|} = \rm{poly}(k_{\rm{tot}})$, is $O[\log(k_{\rm{tot}})]$.
\end{mycorollary}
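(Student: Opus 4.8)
The plan is to rewrite the cost of the register-wide DD protocol in terms of the generator count of a \emph{single} domain, evaluate that count with Theorem~\ref{concat}, and then simply read off the constraint imposed by demanding polynomial cost.

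First I would observe that under the relaxed locality assumption of Fig.~\ref{fig:1}(c) the baths attached to distinct domains mutually commute, so the SLDD sequences for the $d(R)$ domains can be executed in parallel: each physical qubit receives pulses only from the generator set $\hbO_i$ of its own domain, and the sequences on different domains do not interfere. Consequently, although the full-register DDGS is $\hbO_{\textrm{tot}}=\cup_i\hbO_i$ with $|\hbO_{\textrm{tot}}|=\sum_i|\hbO_i|\propto d(R)$, the length of the sequence actually run on any qubit---and hence the quantity that can spoil the speedup---is the per-domain cost $c_{\hbO}=f(N)^{|\hbO_i|}$ (rising to $f(N)^{2|\hbO_i|}$ only while two domains interact during a logical gate, which in a fault-tolerant circuit involves $O(1)$ code blocks at a time). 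By Theorem~\ref{concat} applied to the $R$-times concatenated $[[n,k,r,d]]$ code that defines a domain, $|\hbO_i|=n^R-(k+r)^R+2k^R$, which is $\Theta(n^R)$ since a genuine code has $n>k+r\geq k$; in particular $|\hbO_i|\geq k^R=\Theta(k_{\textrm{D}}(R))$.

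Next I would impose $c_{\hbO}=f(N)^{|\hbO_i|}=\textrm{poly}(k_{\textrm{tot}})$. Holding the decoupling order $N$ fixed---legitimate in the fault-tolerant setting, where below threshold a constant effective error per gate, hence a constant $N$, suffices---the base $f(N)$ is a constant, so taking logarithms gives $\Theta(n^R)\log f(N)=O(\log k_{\textrm{tot}})$, i.e. $n^R=O(\log k_{\textrm{tot}})$. Hence a domain contains $O(\log k_{\textrm{tot}})$ physical qubits, and therefore $k_{\textrm{D}}(R)=O(k^R)=O(n^R)=O(\log k_{\textrm{tot}})$ logical qubits. To confirm the bound is saturated, so that $O[\log(k_{\textrm{tot}})]$ is genuinely the \emph{maximal} size, take $R$ to be the largest integer with $n^R\leq c\log k_{\textrm{tot}}$ for a suitable constant $c$: then $n^R=\Theta(\log k_{\textrm{tot}})$, the per-domain cost is $\textrm{poly}(k_{\textrm{tot}})$, and the register is tiled by $d(R)=k_{\textrm{tot}}/k_{\textrm{D}}(R)$ such domains, each decoupled in parallel at polynomial cost.

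The step needing the most care is the first one, where the cost entering the polynomiality requirement is reduced from the naive $f(N)^{|\hbO_{\textrm{tot}}|}=f(N)^{\Theta(d(R)\,n^R)}$---exponential in $k_{\textrm{tot}}$---to the per-domain $f(N)^{|\hbO_i|}$. This rests on two structural facts that must be argued rather than computed: (i) commuting baths on disjoint domains make the single-domain pieces $H_i$ of the noise Hamiltonian mutually commute, so the $\hbO_i$-sequences superpose without cross-interference and each domain is an independent $\hbO_i$-decoupling problem that can be run simultaneously with the others; and (ii) a fault-tolerant logical gate acts on $O(1)$ code blocks, so the joint DDGS in force during a gate has size $O(|\hbO_i|)$, not $O(d(R)\,|\hbO_i|)$. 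Granting these, the remainder---substituting the generator count of Theorem~\ref{concat}, using $n>k+r$, and taking a logarithm---is routine.
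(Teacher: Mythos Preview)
Your proof is correct and follows essentially the same route as the paper's: invoke Theorem~\ref{concat} to bound the per-domain $|\hbO|$, then take logarithms of the polynomial-cost constraint $f(N)^{|\hbO|}=\mathrm{poly}(k_{\rm tot})$. The paper is terser---it simply takes the per-domain cost as given without your parallelization argument, simplifies $|\hbO|$ to $O(k^R)=O[k_{\textrm{D}}(R)]$ by assuming code parameters with $n\sim r\sim k$ (and citing the fault-tolerance fact $R=O[\log\log(k_{\rm tot})]$) rather than your $|\hbO_i|=\Theta(n^R)$ followed by $k^R\le n^R$, and omits your saturation check; your version is arguably more careful on each of these points, but the skeleton is the same.
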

\begin{proof}
We assume that the total cost per domain $c_{\hbO}$ is Eq.~\eqref{eq:cost} as it captures all known DD sequences. Theorem~\ref{concat} shows that $|\hbO| = O[n^R -(k+r)^R + 2 k^R]$ (the $O$ symbol is used since we allow for the presence of ancillas in the domain). We may assume that the code has parameters such that $n\sim r \sim k$, so that  $|\hbO| = O(k^R) = O[k_{\textrm{D}}(R)]$. Now recall that $R = O[\log\log (k_{\textrm{tot}})]$ in a fault-tolerant simulation of a quantum circuit \cite{Nielsen:book}. Therefore $c_{\hbO}  = f(N)^{|\hbO|} =\rm{poly}(k_{\rm{tot}})$ requires $k_{\textrm{D}}(R) = O[\log(\rm{poly}(k_{\rm{tot}}))] = O[\log(k_{\rm{tot}})]$.
\end{proof}

Corollary~\ref{domain} means that we can relax the local bath assumption, an assumption tantamount to assuming \emph{constant} domain size $k_{\textrm{D}}\leq 2$ \cite{NLP:09}; instead we find that domains are allowed to grow logarithmically with problem size. When two domains $i$ and $j$ are required to interact, the joint DDGS $\hbO_i\cup\hbO_j$ should be used [see Fig.~\ref{fig:1}(c)]. 
If the result is that at the highest concatenation level the noise per gate has been reduced (as shown explicitly for the local bath setting in Ref.~\cite{NLP:09}), then a reduction in the number of required concatenation levels is enabled, hence reducing the overall overhead, or the effective noise threshold. 

{\it Enhanced fidelity gates via DD}.---%
So far we discussed the problem of protecting stored quantum information; what about computation? Quantum logic operations can be combined with DD, e.g., using ``decouple while compute" schemes \cite{KhodjastehLidar:08,West:10}, or  (concatenated) dynamically corrected gates [(C)DCGs] for finite-width pulses \cite{KLV:09}, or dynamically protected gates~\cite{NLP:09} in the zero-width (ideal) pulse limit. The optimal SLDD scheme introduced here is directly portable into the latter two schemes, since they use the same DD building blocks and the associated group structure. It is important to emphasize that SLDD sequences require only bitwise (i.e., transversal) pulses, and can be generated by one-local Hamiltonians, thus not altering the assumptions of the CDCG construction. More importantly, the polynomial scaling guaranteed by Corollary~\ref{domain} also applies in the quantum logic scenario, thus allowing, in principle, a fidelity improvement without sacrificing the speedup of quantum computing. 

\textit{Conclusions and outlook}.---%
All known DD sequences scale exponentially with the cardinality of their generating sets [Eq.~\eqref{eq:cost}]. In this work we identified the optimal generating set in the general context of protection of encoded information. This allowed us to show how DD and FTQC can be optimally integrated. In doing so we relaxed the local-bath assumption and showed that it can be replaced with domains growing logarithmically with problem size. Two important open problems remain: to demonstrate that DD-enhanced FTQC results in improved resource overheads and lower noise thresholds, and to identify, or rule out, multi-qubit DD sequences with sub-exponential scaling in the the cardinality of their generating sets.

\textit{Acknowledgments}.---%
Supported by the US Department of Defense and the Intelligence Advanced Research Projects Activity (IARPA) via Department of Interior National Business Center contract number D11PC20165. The U.S. Government is authorized to reproduce and distribute reprints for Governmental purposes notwithstanding any copyright annotation thereon. The views and conclusions contained herein are those of
the authors and should not be interpreted as necessarily representing the official policies or endorsements, either expressed or implied, of IARPA, DoI/NBC, or the U.S. Government.


\begin{thebibliography}{46}%
\makeatletter
\providecommand \@ifxundefined [1]{%
 \@ifx{#1\undefined}
}%
\providecommand \@ifnum [1]{%
 \ifnum #1\expandafter \@firstoftwo
 \else \expandafter \@secondoftwo
 \fi
}%
\providecommand \@ifx [1]{%
 \ifx #1\expandafter \@firstoftwo
 \else \expandafter \@secondoftwo
 \fi
}%
\providecommand \natexlab [1]{#1}%
\providecommand \enquote  [1]{``#1''}%
\providecommand \bibnamefont  [1]{#1}%
\providecommand \bibfnamefont [1]{#1}%
\providecommand \citenamefont [1]{#1}%
\providecommand \href@noop [0]{\@secondoftwo}%
\providecommand \href [0]{\begingroup \@sanitize@url \@href}%
\providecommand \@href[1]{\@@startlink{#1}\@@href}%
\providecommand \@@href[1]{\endgroup#1\@@endlink}%
\providecommand \@sanitize@url [0]{\catcode `\\12\catcode `\$12\catcode
  `\&12\catcode `\#12\catcode `\^12\catcode `\_12\catcode `\%12\relax}%
\providecommand \@@startlink[1]{}%
\providecommand \@@endlink[0]{}%
\providecommand \url  [0]{\begingroup\@sanitize@url \@url }%
\providecommand \@url [1]{\endgroup\@href {#1}{\urlprefix }}%
\providecommand \urlprefix  [0]{URL }%
\providecommand \Eprint [0]{\href }%
\@ifxundefined \urlstyle {%
  \providecommand \doi  [0]{\begingroup \@sanitize@url \@doi}%
  \providecommand \@doi [1]{\endgroup \@@startlink {\doibase
  #1}doi:\discretionary {}{}{}#1\@@endlink }%
}{%
  \providecommand \doi  [0]{doi:\discretionary{}{}{}\begingroup
  \urlstyle{rm}\Url }%
}%
\providecommand \doibase [0]{http://dx.doi.org/}%
\providecommand \Doi [0]{\begingroup \@sanitize@url \@Doi }%
\providecommand \@Doi  [1]{\endgroup\@@startlink{\doibase#1}\@@Doi}%
\providecommand \@@Doi [1]{#1\@@endlink}%
\providecommand \selectlanguage [0]{\@gobble}%
\providecommand \bibinfo  [0]{\@secondoftwo}%
\providecommand \bibfield  [0]{\@secondoftwo}%
\providecommand \translation [1]{[#1]}%
\providecommand \BibitemOpen [0]{}%
\providecommand \bibitemStop [0]{}%
\providecommand \bibitemNoStop [0]{.\EOS\space}%
\providecommand \EOS [0]{\spacefactor3000\relax}%
\providecommand \BibitemShut  [1]{\csname bibitem#1\endcsname}%
\bibitem [{\citenamefont {Shor}(1995)}]{Shor:95}%
  \BibitemOpen
  \bibfield  {author} {\bibinfo {author} {\bibfnamefont {P.}~\bibnamefont
  {Shor}},\ }\href {http://dx.doi.org/10.1103/PhysRevA.52.R2493} {\bibfield
  {journal} {\bibinfo  {journal} {Phys. Rev. A},\ }\textbf {\bibinfo {volume}
  {52}},\ \bibinfo {pages} {R2493} (\bibinfo {year} {1995})}\BibitemShut
  {NoStop}%
\bibitem [{\citenamefont {Calderbank}\ and\ \citenamefont
  {Shor}(1996)}]{Calderbank:96}%
  \BibitemOpen
  \bibfield  {author} {\bibinfo {author} {\bibfnamefont {A.}~\bibnamefont
  {Calderbank}}\ and\ \bibinfo {author} {\bibfnamefont {P.}~\bibnamefont
  {Shor}},\ }\href {http://dx.doi.org/10.1103/PhysRevA.54.1098} {\bibfield
  {journal} {\bibinfo  {journal} {Phys. Rev. A},\ }\textbf {\bibinfo {volume}
  {54}},\ \bibinfo {pages} {1098} (\bibinfo {year} {1996})}\BibitemShut
  {NoStop}%
\bibitem [{\citenamefont {Steane}(1996)}]{Steane:96b}%
  \BibitemOpen
  \bibfield  {author} {\bibinfo {author} {\bibfnamefont {A.}~\bibnamefont
  {Steane}},\ }\href {http://dx.doi.org/10.1098/rspa.1996.0136} {\bibfield
  {journal} {\bibinfo  {journal} {Proc. R. Soc. London Ser. A},\ }\textbf
  {\bibinfo {volume} {452}},\ \bibinfo {pages} {2551} (\bibinfo {year}
  {1996})}\BibitemShut {NoStop}%
\bibitem [{\citenamefont {Gottesman}(1996)}]{Gottesman:96}%
  \BibitemOpen
  \bibfield  {author} {\bibinfo {author} {\bibfnamefont {D.}~\bibnamefont
  {Gottesman}},\ }\href {http://dx.doi.org/10.1103/PhysRevA.54.1862} {\bibfield
   {journal} {\bibinfo  {journal} {Phys. Rev. A},\ }\textbf {\bibinfo {volume}
  {54}},\ \bibinfo {pages} {1862} (\bibinfo {year} {1996})}\BibitemShut
  {NoStop}%
\bibitem [{\citenamefont {Viola}\ and\ \citenamefont {Lloyd}(1998)}]{Viola:98}%
  \BibitemOpen
  \bibfield  {author} {\bibinfo {author} {\bibfnamefont {L.}~\bibnamefont
  {Viola}}\ and\ \bibinfo {author} {\bibfnamefont {S.}~\bibnamefont {Lloyd}},\
  }\href {http://dx.doi.org/10.1103/PhysRevA.58.2733} {\bibfield  {journal}
  {\bibinfo  {journal} {Phys. Rev. A},\ }\textbf {\bibinfo {volume} {58}},\
  \bibinfo {pages} {2733} (\bibinfo {year} {1998})}\BibitemShut {NoStop}%
\bibitem [{\citenamefont {Zanardi}(1999)}]{Zanardi:98b}%
  \BibitemOpen
  \bibfield  {author} {\bibinfo {author} {\bibfnamefont {P.}~\bibnamefont
  {Zanardi}},\ }\href {http://dx.doi.org/10.1016/S0375-9601(99)00365-5}
  {\bibfield  {journal} {\bibinfo  {journal} {Phys. Lett. A},\ }\textbf
  {\bibinfo {volume} {258}},\ \bibinfo {pages} {77} (\bibinfo {year}
  {1999})}\BibitemShut {NoStop}%
\bibitem [{\citenamefont {Duan}\ and\ \citenamefont {Guo}(1999)}]{Duan:98e}%
  \BibitemOpen
  \bibfield  {author} {\bibinfo {author} {\bibfnamefont {L.-M.}\ \bibnamefont
  {Duan}}\ and\ \bibinfo {author} {\bibfnamefont {G.}~\bibnamefont {Guo}},\
  }\href {http://dx.doi.org/10.1016/S0375-9601(99)00592-7} {\bibfield
  {journal} {\bibinfo  {journal} {Phys. Lett. A},\ }\textbf {\bibinfo {volume}
  {261}},\ \bibinfo {pages} {139} (\bibinfo {year} {1999})}\BibitemShut
  {NoStop}%
\bibitem [{\citenamefont {Viola}\ \emph {et~al.}(1999)\citenamefont {Viola},
  \citenamefont {Knill},\ and\ \citenamefont {Lloyd}}]{Viola:99}%
  \BibitemOpen
  \bibfield  {author} {\bibinfo {author} {\bibfnamefont {L.}~\bibnamefont
  {Viola}}, \bibinfo {author} {\bibfnamefont {E.}~\bibnamefont {Knill}}, \ and\
  \bibinfo {author} {\bibfnamefont {S.}~\bibnamefont {Lloyd}},\ }\href
  {http://dx.doi.org/10.1103/PhysRevLett.82.2417} {\bibfield  {journal}
  {\bibinfo  {journal} {Phys. Rev. Lett.},\ }\textbf {\bibinfo {volume} {82}},\
  \bibinfo {pages} {2417} (\bibinfo {year} {1999})}\BibitemShut {NoStop}%
\bibitem [{\citenamefont {{For a recent review see W. Yang, Z.-Y. Wang, and
  R.-B. Liu}}(2011)}]{Yang:2010:2}%
  \BibitemOpen
  \bibfield  {author} {\bibinfo {author} {\bibnamefont {{For a recent review
  see W. Yang, Z.-Y. Wang, and R.-B. Liu}}},\ }\Doi {10.1007/s11467-010-0113-8}
  {\bibfield  {journal} {\bibinfo  {journal} {Front. Phys.},\ }\textbf
  {\bibinfo {volume} {6}},\ \bibinfo {pages} {2} (\bibinfo {year}
  {2011})}\BibitemShut {NoStop}%
\bibitem [{\citenamefont {Aharonov}\ and\ \citenamefont
  {Ben-Or}(2008)}]{Aharonov:08}%
  \BibitemOpen
  \bibfield  {author} {\bibinfo {author} {\bibfnamefont {D.}~\bibnamefont
  {Aharonov}}\ and\ \bibinfo {author} {\bibfnamefont {M.}~\bibnamefont
  {Ben-Or}},\ }\href {http://dx.doi.org/10.1137/S0097539799359385} {\bibfield
  {journal} {\bibinfo  {journal} {SIAM J. Comput.},\ }\textbf {\bibinfo
  {volume} {38}},\ \bibinfo {pages} {1207} (\bibinfo {year}
  {2008})}\BibitemShut {NoStop}%
\bibitem [{\citenamefont {Knill}\ \emph {et~al.}(1998)\citenamefont {Knill},
  \citenamefont {Laflamme},\ and\ \citenamefont {Zurek}}]{Knill:97a}%
  \BibitemOpen
  \bibfield  {author} {\bibinfo {author} {\bibfnamefont {E.}~\bibnamefont
  {Knill}}, \bibinfo {author} {\bibfnamefont {R.}~\bibnamefont {Laflamme}}, \
  and\ \bibinfo {author} {\bibfnamefont {W.}~\bibnamefont {Zurek}},\ }\href
  {http://www.jstor.org/stable/53171} {\bibfield  {journal} {\bibinfo
  {journal} {Proc. R. Soc. London Ser. A},\ }\textbf {\bibinfo {volume}
  {454}},\ \bibinfo {pages} {365} (\bibinfo {year} {1998})}\BibitemShut
  {NoStop}%
\bibitem [{\citenamefont {Terhal}\ and\ \citenamefont
  {Burkard}(2005)}]{Terhal:04}%
  \BibitemOpen
  \bibfield  {author} {\bibinfo {author} {\bibfnamefont {B.}~\bibnamefont
  {Terhal}}\ and\ \bibinfo {author} {\bibfnamefont {G.}~\bibnamefont
  {Burkard}},\ }\href {http://link.aps.org/doi/10.1103/PhysRevA.71.012336}
  {\bibfield  {journal} {\bibinfo  {journal} {Phys. Rev. A},\ }\textbf
  {\bibinfo {volume} {71}},\ \bibinfo {pages} {012336} (\bibinfo {year}
  {2005})}\BibitemShut {NoStop}%
\bibitem [{\citenamefont {Aliferis}\ \emph {et~al.}(2006)\citenamefont
  {Aliferis}, \citenamefont {Gottesman},\ and\ \citenamefont
  {Preskill}}]{Aliferis:05}%
  \BibitemOpen
  \bibfield  {author} {\bibinfo {author} {\bibfnamefont {P.}~\bibnamefont
  {Aliferis}}, \bibinfo {author} {\bibfnamefont {D.}~\bibnamefont {Gottesman}},
  \ and\ \bibinfo {author} {\bibfnamefont {J.}~\bibnamefont {Preskill}},\
  }\href {http://dl.acm.org/citation.cfm?id=2011665.2011666} {\bibfield
  {journal} {\bibinfo  {journal} {Quantum Inf. Comput.},\ }\textbf {\bibinfo
  {volume} {6}},\ \bibinfo {pages} {97} (\bibinfo {year} {2006})}\BibitemShut
  {NoStop}%
\bibitem [{\citenamefont {Aharonov}\ \emph {et~al.}(2006)\citenamefont
  {Aharonov}, \citenamefont {Kitaev},\ and\ \citenamefont
  {Preskill}}]{Aharonov:05}%
  \BibitemOpen
  \bibfield  {author} {\bibinfo {author} {\bibfnamefont {D.}~\bibnamefont
  {Aharonov}}, \bibinfo {author} {\bibfnamefont {A.}~\bibnamefont {Kitaev}}, \
  and\ \bibinfo {author} {\bibfnamefont {J.}~\bibnamefont {Preskill}},\ }\Doi
  {10.1103/PhysRevLett.96.050504} {\bibfield  {journal} {\bibinfo  {journal}
  {Phys. Rev. Lett.},\ }\textbf {\bibinfo {volume} {96}},\ \bibinfo {pages}
  {050504} (\bibinfo {year} {2006})}\BibitemShut {NoStop}%
\bibitem [{\citenamefont {Ng}\ and\ \citenamefont
  {Preskill}(2009)}]{ng:032318}%
  \BibitemOpen
  \bibfield  {author} {\bibinfo {author} {\bibfnamefont {H.~K.}\ \bibnamefont
  {Ng}}\ and\ \bibinfo {author} {\bibfnamefont {J.}~\bibnamefont {Preskill}},\
  }\Doi {10.1103/PhysRevA.79.032318} {\bibfield  {journal} {\bibinfo  {journal}
  {Phys. Rev. A},\ }\textbf {\bibinfo {volume} {79}},\ \bibinfo {eid} {032318}
  (\bibinfo {year} {2009})}\BibitemShut {NoStop}%
\bibitem [{\citenamefont {Khodjasteh}\ and\ \citenamefont
  {Lidar}(2005)}]{KhodjastehLidar:04}%
  \BibitemOpen
  \bibfield  {author} {\bibinfo {author} {\bibfnamefont {K.}~\bibnamefont
  {Khodjasteh}}\ and\ \bibinfo {author} {\bibfnamefont {D.~A.}\ \bibnamefont
  {Lidar}},\ }\href {http://dx.doi.org/10.1103/PhysRevLett.95.180501}
  {\bibfield  {journal} {\bibinfo  {journal} {Phys. Rev. Lett.},\ }\textbf
  {\bibinfo {volume} {95}},\ \bibinfo {pages} {180501} (\bibinfo {year}
  {2005})}\BibitemShut {NoStop}%
\bibitem [{\citenamefont {Uhrig}(2007)}]{Uhrig:07}%
  \BibitemOpen
  \bibfield  {author} {\bibinfo {author} {\bibfnamefont {G.}~\bibnamefont
  {Uhrig}},\ }\href {http://dx.doi.org/10.1103/PhysRevLett.98.100504}
  {\bibfield  {journal} {\bibinfo  {journal} {Phys. Rev. Lett.},\ }\textbf
  {\bibinfo {volume} {98}},\ \bibinfo {pages} {100504} (\bibinfo {year}
  {2007})}\BibitemShut {NoStop}%
\bibitem [{\citenamefont {Yang}\ and\ \citenamefont
  {Liu}(2008)}]{Yang:2008:180403}%
  \BibitemOpen
  \bibfield  {author} {\bibinfo {author} {\bibfnamefont {W.}~\bibnamefont
  {Yang}}\ and\ \bibinfo {author} {\bibfnamefont {R.-B.}\ \bibnamefont {Liu}},\
  }\href {http://dx.doi.org/10.1103/PhysRevLett.101.180403} {\bibfield
  {journal} {\bibinfo  {journal} {Phys. Rev. Lett.},\ }\textbf {\bibinfo
  {volume} {101}},\ \bibinfo {pages} {180403} (\bibinfo {year}
  {2008})}\BibitemShut {NoStop}%
\bibitem [{\citenamefont {Uhrig}(2009)}]{CUDD}%
  \BibitemOpen
  \bibfield  {author} {\bibinfo {author} {\bibfnamefont {G.~S.}\ \bibnamefont
  {Uhrig}},\ }\href {http://dx.doi.org/10.1103/PhysRevLett.102.120502}
  {\bibfield  {journal} {\bibinfo  {journal} {Phys. Rev. Lett.},\ }\textbf
  {\bibinfo {volume} {102}},\ \bibinfo {pages} {120502} (\bibinfo {year}
  {2009})}\BibitemShut {NoStop}%
\bibitem [{\citenamefont {West}\ \emph
  {et~al.}(2010){\natexlab{a}}\citenamefont {West}, \citenamefont {Fong},\ and\
  \citenamefont {Lidar}}]{WFL:09}%
  \BibitemOpen
  \bibfield  {author} {\bibinfo {author} {\bibfnamefont {J.~R.}\ \bibnamefont
  {West}}, \bibinfo {author} {\bibfnamefont {B.~H.}\ \bibnamefont {Fong}}, \
  and\ \bibinfo {author} {\bibfnamefont {D.~A.}\ \bibnamefont {Lidar}},\ }\href
  {http://dx.doi.org/10.1103/PhysRevLett.104.130501} {\bibfield  {journal}
  {\bibinfo  {journal} {Phys. Rev. Lett.},\ }\textbf {\bibinfo {volume}
  {104}},\ \bibinfo {pages} {130501} (\bibinfo {year}
  {2010}{\natexlab{a}})}\BibitemShut {NoStop}%
\bibitem [{\citenamefont {Wang}\ and\ \citenamefont {Liu}(2011)}]{NUDD}%
  \BibitemOpen
  \bibfield  {author} {\bibinfo {author} {\bibfnamefont {Z.-Y.}\ \bibnamefont
  {Wang}}\ and\ \bibinfo {author} {\bibfnamefont {R.-B.}\ \bibnamefont {Liu}},\
  }\Doi {10.1103/PhysRevA.83.022306} {\bibfield  {journal} {\bibinfo  {journal}
  {Phys. Rev. A},\ }\textbf {\bibinfo {volume} {83}},\ \bibinfo {pages}
  {022306} (\bibinfo {year} {2011})}\BibitemShut {NoStop}%
\bibitem [{\citenamefont {Kuo}\ and\ \citenamefont
  {Lidar}(2011)}]{Kuo:1106.2151}%
  \BibitemOpen
  \bibfield  {author} {\bibinfo {author} {\bibfnamefont {W.-J.}\ \bibnamefont
  {Kuo}}\ and\ \bibinfo {author} {\bibfnamefont {D.~A.}\ \bibnamefont
  {Lidar}},\ }\Doi {10.1103/PhysRevA.84.042329} {\bibfield  {journal} {\bibinfo
   {journal} {Phys. Rev. A},\ }\textbf {\bibinfo {volume} {84}},\ \bibinfo
  {pages} {042329} (\bibinfo {year} {2011})}\BibitemShut {NoStop}%
\bibitem [{\citenamefont {Jiang}\ and\ \citenamefont
  {Imambekov}(2011)}]{PhysRevA.84.060302}%
  \BibitemOpen
  \bibfield  {author} {\bibinfo {author} {\bibfnamefont {L.}~\bibnamefont
  {Jiang}}\ and\ \bibinfo {author} {\bibfnamefont {A.}~\bibnamefont
  {Imambekov}},\ }\Doi {10.1103/PhysRevA.84.060302} {\bibfield  {journal}
  {\bibinfo  {journal} {Phys. Rev. A},\ }\textbf {\bibinfo {volume} {84}},\
  \bibinfo {pages} {060302} (\bibinfo {year} {2011})}\BibitemShut {NoStop}%
\bibitem [{\citenamefont {Khodjasteh}\ \emph {et~al.}(2011)\citenamefont
  {Khodjasteh}, \citenamefont {Erd\'elyi},\ and\ \citenamefont
  {Viola}}]{PhysRevA.83.020305}%
  \BibitemOpen
  \bibfield  {author} {\bibinfo {author} {\bibfnamefont {K.}~\bibnamefont
  {Khodjasteh}}, \bibinfo {author} {\bibfnamefont {T.}~\bibnamefont
  {Erd\'elyi}}, \ and\ \bibinfo {author} {\bibfnamefont {L.}~\bibnamefont
  {Viola}},\ }\href {http://dx.doi.org/10.1103/PhysRevA.83.020305} {\bibfield
  {journal} {\bibinfo  {journal} {Phys. Rev. A},\ }\textbf {\bibinfo {volume}
  {83}},\ \bibinfo {pages} {020305} (\bibinfo {year} {2011})}\BibitemShut
  {NoStop}%
\bibitem [{\citenamefont {Byrd}\ and\ \citenamefont
  {Lidar}(2002)}]{ByrdLidar:01a}%
  \BibitemOpen
  \bibfield  {author} {\bibinfo {author} {\bibfnamefont {M.~S.}\ \bibnamefont
  {Byrd}}\ and\ \bibinfo {author} {\bibfnamefont {D.~A.}\ \bibnamefont
  {Lidar}},\ }\href {http://link.aps.org/doi/10.1103/PhysRevLett.89.047901}
  {\bibfield  {journal} {\bibinfo  {journal} {Phys. Rev. Lett.},\ }\textbf
  {\bibinfo {volume} {89}},\ \bibinfo {pages} {047901} (\bibinfo {year}
  {2002})}\BibitemShut {NoStop}%
\bibitem [{\citenamefont {{N. Boulant, M.A. Pravia, E.M. Fortunato, T.F. Havel
  and D.G. Cory}}(2002)}]{Boulant:02}%
  \BibitemOpen
  \bibfield  {author} {\bibinfo {author} {\bibnamefont {{N. Boulant, M.A.
  Pravia, E.M. Fortunato, T.F. Havel and D.G. Cory}}},\ }\href
  {http://dx.doi.org/10.1023/A:1019623208633} {\bibfield  {journal} {\bibinfo
  {journal} {Quant. Inf. Proc.},\ }\textbf {\bibinfo {volume} {1}},\ \bibinfo
  {pages} {135} (\bibinfo {year} {2002})}\BibitemShut {NoStop}%
\bibitem [{\citenamefont {{K. Khodjasteh and D. A.
  Lidar}}(2003)}]{KhodjastehLidar:03}%
  \BibitemOpen
  \bibfield  {author} {\bibinfo {author} {\bibnamefont {{K. Khodjasteh and D.
  A. Lidar}}},\ }\href {http://dx.doi.org/10.1103/PhysRevA.68.022322}
  {\bibfield  {journal} {\bibinfo  {journal} {Phys. Rev. A},\ }\textbf
  {\bibinfo {volume} {68}},\ \bibinfo {pages} {022322} (\bibinfo {year}
  {2003})},\ \bibinfo {note} {erratum: {\it ibid}, Phys. Rev. A {\bf 72},
  029905 (2005).}\BibitemShut {Stop}%
\bibitem [{\citenamefont {Ng}\ \emph {et~al.}(2011)\citenamefont {Ng},
  \citenamefont {Lidar},\ and\ \citenamefont {Preskill}}]{NLP:09}%
  \BibitemOpen
  \bibfield  {author} {\bibinfo {author} {\bibfnamefont {H.~K.}\ \bibnamefont
  {Ng}}, \bibinfo {author} {\bibfnamefont {D.~A.}\ \bibnamefont {Lidar}}, \
  and\ \bibinfo {author} {\bibfnamefont {J.}~\bibnamefont {Preskill}},\ }\Doi
  {10.1103/PhysRevA.84.012305} {\bibfield  {journal} {\bibinfo  {journal}
  {Phys. Rev. A},\ }\textbf {\bibinfo {volume} {84}},\ \bibinfo {pages}
  {012305} (\bibinfo {year} {2011})}\BibitemShut {NoStop}%
\bibitem [{nor()}]{norm-to-freq}%
  \BibitemOpen
  \href@noop {} {}\bibinfo {note} {Some noise models, such as bosonic baths,
  violate the $\|H\|<\infty$ assumption. In this case our analysis still
  applies, but operator norms must be replaced by frequency cutoffs; see, e.g.,
  Refs.~\cite{NLP:09,ng:032318,Viola:98}.}\BibitemShut {Stop}%
\bibitem [{\citenamefont {Blanes}\ \emph {et~al.}(2009)\citenamefont {Blanes},
  \citenamefont {Casas}, \citenamefont {Oteo},\ and\ \citenamefont
  {Ros}}]{Blanes:08}%
  \BibitemOpen
  \bibfield  {author} {\bibinfo {author} {\bibfnamefont {S.}~\bibnamefont
  {Blanes}}, \bibinfo {author} {\bibfnamefont {F.}~\bibnamefont {Casas}},
  \bibinfo {author} {\bibfnamefont {J.}~\bibnamefont {Oteo}}, \ and\ \bibinfo
  {author} {\bibfnamefont {J.}~\bibnamefont {Ros}},\ }\href
  {http://dx.doi.org/10.1016/j.physrep.2008.11.001} {\bibfield  {journal}
  {\bibinfo  {journal} {Phys. Rep.},\ }\textbf {\bibinfo {volume} {470}},\
  \bibinfo {pages} {151} (\bibinfo {year} {2009})}\BibitemShut {NoStop}%
\bibitem [{DD-()}]{DD-review}%
  \BibitemOpen
  \href@noop {} {}\bibinfo {note} {Concatenated DD (CDD)
  \cite{KhodjastehLidar:04}, the first explicit arbitrary order DD method, uses
  a recursive nesting of elementary pulse sequences and (provided pulse
  intervals can be made arbitrarily small) can be used to achieve $N$th order
  decoupling of $n$ qubits with both $N$ and $n$ arbitrary, but requires a
  number of pulses that is exponential in both $N$ and
  $n$~\cite{KhodjastehLidar:04}. Pulse-interval optimized sequences are now
  known for purely longitudinal or purely transversal system-bath coupling,
  requiring only $N+1$ pulses for $N$th order decoupling \cite{Uhrig:07}. The
  Uhrig DD (UDD) sequence that accomplishes this was generalized to the
  quadratic DD (QDD) sequence for general decoherence of a single qubit
  \cite{WFL:09}, which uses a nesting of the transversal and longitudinal UDD
  sequences to achieve $N$th order decoupling using $(N+1)^{2}$ pulses, an
  exponential improvement over CDD and concatenated UDD \cite{CUDD}. Both UDD
  and QDD are essentially optimal in terms of the number of pulses required,
  and are provably universal for arbitrary, bounded baths
  \cite{Yang:2008:180403,Kuo:1106.2151,PhysRevA.84.060302}. Generalizing from
  QDD, nested UDD (NUDD)~pulse sequences were proposed for arbitrary
  system-environment coupling involving $n$ qubits or even higher-dimensional
  systems \cite{NUDD}. NUDD\ requires $(N+1)^{2n}$ pulses to decouple $n$
  qubits to $N$th order from an arbitrary environment.}\BibitemShut {Stop}%
\bibitem [{not()}]{notation}%
  \BibitemOpen
  \href@noop {} {}\bibinfo {note} {Throughout this work we denote the generator
  set of a set $S$ by $\hat{S}$ and the cardinality of a set $S$ by
  $|S|$.}\BibitemShut {Stop}%
\bibitem [{\citenamefont {Wocjan}(2006)}]{PhysRevA.73.062317}%
  \BibitemOpen
  \bibfield  {author} {\bibinfo {author} {\bibfnamefont {P.}~\bibnamefont
  {Wocjan}},\ }\Doi {10.1103/PhysRevA.73.062317} {\bibfield  {journal}
  {\bibinfo  {journal} {Phys. Rev. A},\ }\textbf {\bibinfo {volume} {73}},\
  \bibinfo {pages} {062317} (\bibinfo {year} {2006})}\BibitemShut {NoStop}%
\bibitem [{pro()}]{proof-th1}%
  \BibitemOpen
  \href@noop {} {}\bibinfo {note} {We prove Theorem 1. Let $B$ be generated by
  $\hat{B} = \{ b_i \}_{i=1}^{|\hat{B}|}$, so that $|B|=2^{|\hat{B}|}$, and
  consider the DD generating set $\{ \Omega_\alpha \}_{\alpha=1}^{|\hbO|}$. One
  can associate to each $b_i$ a string $s^{(i)} =
  \{s^{(i)}_1,\dots,s^{(i)}_{{|\hbO|}}\}$ where $s^{(i)}_\alpha$ encodes the
  effect the pulse $\Omega_\alpha$ has on the error term $b_i$ (commutes or
  anticommutes), via $s^{(i)}_\alpha = \pm $ if $\Omega_\alpha b_i
  \Omega_\alpha= \pm b_i$, i.e., $\Omega_\alpha b_i\Omega_\alpha =
  s^{(i)}_\alpha b_i$. The total number of such strings is $|B|$, i.e.,
  $i\in\{1,\dots,2^{|\hat{B}|}\}$. Note that if $b\in B$ is associated with the
  ``identity string" $\{+,\dots,+ \}$ then it will not be decoupled since it
  commutes with all decoupling pulses. Now, we can associate $|\hbO|$ bits
  (over the $\pm$ alphabet) to the $|\hbO|$ DD sequence generators. From these
  $|\hbO|$ bits we can construct exactly $2^{|\hbO|}$ distinct strings
  $\{r^{(i')}\}_{i'=1}^{2^{|\hbO|}}$, where
  $r^{(i')}=\{r^{(i')}_1,\dots,r^{(i')}_{{|\hbO|}}\}$, and
  $r^{(i')}_j\in\{-,+\}$. Let us map the $r^{(i')}$ strings,
  $i'\in\{1,\dots,2^{|\hbO|}\}$, to the $s^{(i)}$ strings,
  $i\in\{1,\dots,2^{|\hat{B}|}\}$. Clearly, if $B$ has ``too many" elements,
  i.e., if $|\hat{B}|>|\hbO|$, then the mapping will be one-to-many, i.e., some
  of the $r^{(i')}$ strings will have to be repeated, meaning that the set
  $\{s^{(i)}\}_{i=1}^{2^{|\hbO|}}$ will contain duplicates. The product of any
  two duplicate strings is the identity string $\{+,\dots,+ \}$. But since $B$
  is a group, this means that the product of the two distinct elements of $B$
  associated with a duplicated string is also a group element, and moreover is
  associated with the identity string. Since the elements of $B$ are in the
  Pauli group, the product of any two distinct elements cannot be the identity
  operator. Thus we have shown that there is a non-identity element of $B$
  which is associated with the identity-string, and hence is not decoupled. On
  the other hand, a DD generating set of cardinality $|\hat{B}|$ exists and is
  just $\hat{B}$ itself.}\BibitemShut {Stop}%
\bibitem [{th1()}]{th1-general}%
  \BibitemOpen
  \href@noop {} {}\bibinfo {note} {Theorem 1 can in fact be generalized by
  allowing $B$ to not be a subgroup of $P_n$, although we do not require or use
  this more general version here. The proof is similar: if a DDGS $\hbO$
  satisfying the MOOS properties exists such that the only element of $B$ that
  commutes with all elements in $\hbO$ is $\openone$ and, if each element in
  $\hat{B}$ has a unique inverse then, following an argument similar to the one
  used in \cite{proof-th1}, such a DDGS decoupling $B$ satisfies $|\hbO| \geq
  |\hat{B}|$. This more general result applies to higher dimensional
  subsystems, such as qudits. The existence of such a DDGS is guaranteed, in
  particular, for subgroups of $P_n$.}\BibitemShut {Stop}%
\bibitem [{\citenamefont {Bacon}(2006)}]{Bacon:05}%
  \BibitemOpen
  \bibfield  {author} {\bibinfo {author} {\bibfnamefont {D.}~\bibnamefont
  {Bacon}},\ }\href {http://link.aps.org/doi/10.1103/PhysRevA.73.012340}
  {\bibfield  {journal} {\bibinfo  {journal} {Phys. Rev. A},\ }\textbf
  {\bibinfo {volume} {73}},\ \bibinfo {pages} {012340} (\bibinfo {year}
  {2006})}\BibitemShut {NoStop}%
\bibitem [{\citenamefont {Kribs}\ \emph {et~al.}(2005)\citenamefont {Kribs},
  \citenamefont {Laflamme},\ and\ \citenamefont {Poulin}}]{Kribs:2005fk}%
  \BibitemOpen
  \bibfield  {author} {\bibinfo {author} {\bibfnamefont {D.}~\bibnamefont
  {Kribs}}, \bibinfo {author} {\bibfnamefont {R.}~\bibnamefont {Laflamme}}, \
  and\ \bibinfo {author} {\bibfnamefont {D.}~\bibnamefont {Poulin}},\ }\href
  {http://link.aps.org/doi/10.1103/PhysRevLett.94.180501} {\bibfield  {journal}
  {\bibinfo  {journal} {Phys. Rev. Lett.},\ }\textbf {\bibinfo {volume} {94}},\
  \bibinfo {pages} {180501} (\bibinfo {year} {2005})}\BibitemShut {NoStop}%
\bibitem [{\citenamefont {Poulin}(2005)}]{Poulin:2005uq}%
  \BibitemOpen
  \bibfield  {author} {\bibinfo {author} {\bibfnamefont {D.}~\bibnamefont
  {Poulin}},\ }\href {http://link.aps.org/doi/10.1103/PhysRevLett.95.230504}
  {\bibfield  {journal} {\bibinfo  {journal} {Phys. Rev. Lett.},\ }\textbf
  {\bibinfo {volume} {95}},\ \bibinfo {pages} {230504} (\bibinfo {year}
  {2005})}\BibitemShut {NoStop}%
\bibitem [{SDD()}]{SDD}%
  \BibitemOpen
  \href@noop {} {}\bibinfo {note} {One might try instead to choose as a DD
  sequence generator set the stabilizers only, i.e., let $\hbO= \hat{\bf S}$
  \cite{ByrdLidar:01a}. However, since the logical operators of the same code
  commute with these stabilizer DD pulses, they are not decoupled and hence
  have non-trivial action on the code subspace, thus causing logical errors.
  Formally, when $\hbO= \hat{\bf S}$, $H_0^{\textrm{eff},N}$ will contain
  logical operators.}\BibitemShut {Stop}%
\bibitem [{\citenamefont {Paz-Silva}\ \emph {et~al.}(2010)\citenamefont
  {Paz-Silva}, \citenamefont {Brennen},\ and\ \citenamefont
  {Twamley}}]{Paz-Silva:2010fk}%
  \BibitemOpen
  \bibfield  {author} {\bibinfo {author} {\bibfnamefont {G.~A.}\ \bibnamefont
  {Paz-Silva}}, \bibinfo {author} {\bibfnamefont {G.~K.}\ \bibnamefont
  {Brennen}}, \ and\ \bibinfo {author} {\bibfnamefont {J.}~\bibnamefont
  {Twamley}},\ }\href {http://link.aps.org/doi/10.1103/PhysRevLett.105.100501}
  {\bibfield  {journal} {\bibinfo  {journal} {Phys. Rev. Lett.},\ }\textbf
  {\bibinfo {volume} {105}},\ \bibinfo {pages} {100501} (\bibinfo {year}
  {2010})}\BibitemShut {NoStop}%
\bibitem [{\citenamefont {Gaitan}(2008)}]{Gaitan:book}%
  \BibitemOpen
  \bibfield  {author} {\bibinfo {author} {\bibfnamefont {F.}~\bibnamefont
  {Gaitan}},\ }\href@noop {} {\emph {\bibinfo {title} {{Quantum Error
  Correction and Fault Tolerant Quantum Computing}}}}\ (\bibinfo  {publisher}
  {{CRC}},\ \bibinfo {address} {{Boca Raton}},\ \bibinfo {year}
  {2008})\BibitemShut {NoStop}%
\bibitem [{sta()}]{stabilizer-count}%
  \BibitemOpen
  \href@noop {} {}\bibinfo {note} {The total number of physical qubits $n$ in
  an $[[n,k,d]]$ stabilizer subspace code equals the sum of the $Q=n-k$
  stabilizer and $k$ logical qubits \cite{Gottesman:96}. After concatenating
  $R$ times $n\mapsto n(R) = n^R$, $k \mapsto L(R) = k^R$, and hence $Q \mapsto
  Q(R) = n^R-k^R$. Likewise, the total number of physical qubits $n$ in an
  $[[n,k,r,d]]$ stabilizer subsystem code equals the sum of the $Q=n-(k+r)$
  stabilizer, $k$ logical, and $r$ gauge qubits \cite{Poulin:2005uq}. One can
  always view an $[[n,k,r,d]]$ subsystem code as an $[[n,k',d']]$ subspace code
  with $k'=k+r$ and distance $d'\leq d$: in a subsystem code only the $k$
  qubits designated as logical qubits are associated with the code distance
  $d$, whereas the gauge qubits have distance at most $d$. For example, in the
  $[[9,1,4,3]]$ Bacon-Shor code \cite{Bacon:05} the gauge qubits have distance
  $2$ while the logical qubit has distance $3$. Thus, after concatenating an
  $[[n,k,r,d]]$ stabilizer subsystem code $R$ times, the number of physical
  qubits is $n(R)=n^R$, which equals the sum of the $Q(R)$ stabilizer qubits,
  $L(R)=k^R$ logical qubits (with distance $d$), and $G(R)$ gauge qubits (with
  distance $\leq d$). Alternatively, viewed as an $[[n,k',d']]$ subspace code
  concatenated $R$ times, it has $L'(R)=(k')^R$ logical qubits. However, these
  logical qubits are the logical and gauge qubits of the original code, i.e.,
  $L'(R) =L(R)+G(R)$, so that $L(R)+G(R) = (k+r)^R$.}\BibitemShut {Stop}%
\bibitem [{\citenamefont {Nielsen}\ and\ \citenamefont
  {Chuang}(2000)}]{Nielsen:book}%
  \BibitemOpen
  \bibfield  {author} {\bibinfo {author} {\bibfnamefont {M.}~\bibnamefont
  {Nielsen}}\ and\ \bibinfo {author} {\bibfnamefont {I.}~\bibnamefont
  {Chuang}},\ }\href@noop {} {\emph {\bibinfo {title} {Quantum Computation and
  Quantum Information}}}\ (\bibinfo  {publisher} {Cambridge University Press},\
  \bibinfo {address} {Cambridge, England},\ \bibinfo {year} {2000})\BibitemShut
  {NoStop}%
\bibitem [{\citenamefont {Khodjasteh}\ and\ \citenamefont
  {Lidar}(2008)}]{KhodjastehLidar:08}%
  \BibitemOpen
  \bibfield  {author} {\bibinfo {author} {\bibfnamefont {K.}~\bibnamefont
  {Khodjasteh}}\ and\ \bibinfo {author} {\bibfnamefont {D.~A.}\ \bibnamefont
  {Lidar}},\ }\Doi {10.1103/PhysRevA.78.012355} {\bibfield  {journal} {\bibinfo
   {journal} {Phys. Rev. A},\ }\textbf {\bibinfo {volume} {78}},\ \bibinfo
  {pages} {012355} (\bibinfo {year} {2008})}\BibitemShut {NoStop}%
\bibitem [{\citenamefont {West}\ \emph
  {et~al.}(2010){\natexlab{b}}\citenamefont {West}, \citenamefont {Lidar},
  \citenamefont {Fong},\ and\ \citenamefont {Gyure}}]{West:10}%
  \BibitemOpen
  \bibfield  {author} {\bibinfo {author} {\bibfnamefont {J.~R.}\ \bibnamefont
  {West}}, \bibinfo {author} {\bibfnamefont {D.~A.}\ \bibnamefont {Lidar}},
  \bibinfo {author} {\bibfnamefont {B.~H.}\ \bibnamefont {Fong}}, \ and\
  \bibinfo {author} {\bibfnamefont {M.~F.}\ \bibnamefont {Gyure}},\ }\Doi
  {10.1103/PhysRevLett.105.230503} {\bibfield  {journal} {\bibinfo  {journal}
  {Phys. Rev. Lett.},\ }\textbf {\bibinfo {volume} {105}},\ \bibinfo {pages}
  {230503} (\bibinfo {year} {2010}{\natexlab{b}})}\BibitemShut {NoStop}%
\bibitem [{\citenamefont {Khodjasteh}\ \emph {et~al.}(2010)\citenamefont
  {Khodjasteh}, \citenamefont {Lidar},\ and\ \citenamefont {Viola}}]{KLV:09}%
  \BibitemOpen
  \bibfield  {author} {\bibinfo {author} {\bibfnamefont {K.}~\bibnamefont
  {Khodjasteh}}, \bibinfo {author} {\bibfnamefont {D.}~\bibnamefont {Lidar}}, \
  and\ \bibinfo {author} {\bibfnamefont {L.}~\bibnamefont {Viola}},\ }\href
  {http://link.aps.org/doi/10.1103/PhysRevLett.104.090501} {\bibfield
  {journal} {\bibinfo  {journal} {Phys. Rev. Lett.},\ }\textbf {\bibinfo
  {volume} {104}},\ \bibinfo {pages} {090501} (\bibinfo {year}
  {2010})}\BibitemShut {NoStop}%
\end{thebibliography}

%

\end{document}